\newtheorem{theorem}{Theorem}[section]
\newtheorem{lemma}{Lemma}[section]
\newcommand{\algorithmicserver}{\textbf{Global Server}}
\newcommand{\algorithmicclient}{\textbf{Client k}}
\newcommand{\brokenline}[2][t]{\parbox[#1]{\dimexpr\linewidth-\ALG@thistlm}{\strut\raggedright #2\strut}}
\renewcommand{\fnum@figure}{Figure \thefigure}
\renewcommand{\fnum@table}{Table \thetable}
\begin{document}
\include{notation}
%\pagenumbering{gobble}
%\graphicspath{{./Figures/}}
\title{
%Gradient Based Selection for Federated Learning
Energy Efficient Aerial RIS: Phase Shift Optimization and Trajectory Design 
}
%\author{Ouiame Marnissi \hspace{0.5cm} Hajar EL Hammouti \hspace{0.5cm}El Houcine Bergou\\
%School of Computer Science, Mohammed VI Polytechnic University, Ben Guerir, Morocco\\
%\{ouiame.marnissi, hajar.elhammouti, elhoucine.bergou\}@um6p.ma

%}
\author{
    \IEEEauthorblockN{ 
    Hajar El Hammouti\IEEEauthorrefmark{1},
    Adnane Saoud\IEEEauthorrefmark{1}\IEEEauthorrefmark{2},
    Asma Ennahkami\IEEEauthorrefmark{2}, El Houcine Bergou\IEEEauthorrefmark{1}
        }
    \IEEEauthorblockA{ \IEEEauthorrefmark{1}  College of Computing, Mohammed VI Polytechnic University,\\ Ben Guerir, Morocco.\\ \IEEEauthorrefmark{2}
    L2S-CentraleSup\'elec, Univ Paris-Saclay, Gif sur Yvette, France,}
   
 \{hajar.elhammouti, adnane.saoud, elhoucine.bergou\}@um6p.ma, asma.ennahkami@student-cs.fr}

\maketitle

\begin{abstract}

Reconfigurable Intelligent Surface (RIS) technology has gained significant attention due to its ability to enhance the performance of wireless communication systems. The main advantage of RIS is that it can be strategically placed in the environment to control wireless signals, enabling improvements in coverage, capacity, and energy efficiency. In this paper, we investigate a scenario in which a drone, equipped with a RIS, travels from an initial point to a target destination. In this scenario, the aerial RIS (ARIS) is deployed to establish a direct link between the base station and obstructed users. Our objective is to maximize the energy efficiency of the ARIS while taking into account its dynamic model including its velocity and acceleration along with the phase shift of the RIS. To this end, we formulate the energy efficiency problem under the constraints of the dynamic model of the drone. The studied problem is challenging to solve. To address this, we proceed as follows. First, we introduce an efficient solution that involves decoupling the phase shift optimization and the trajectory design. Specifically, the closed-form expression of the phase-shift is obtained using a convex approximation, which is subsequently integrated into the trajectory
design problem. We then employ tools inspired by economic model predictive control (EMPC) to solve the resulting trajectory optimization.
 Our simulation results show a significant improvement in energy efficiency against the scenario where the dynamic model of the UAV is ignored.
%phase shift optimization is solved using a convex approximation, whereas the trajectory optimization under dynamic model constraints is obtained using economic model prediction control (EMPC).
\end{abstract}
\begin{IEEEkeywords}
Aerial reconfigurable intelligent surface, economic model prediction control, energy efficiency, phase shift, sum-rate, UAV.
\end{IEEEkeywords}
\section{Introduction} \label{sec:intro}

Over the past few years, there has been a growing interest in integrating aerial unmanned vehicles (UAVs) to expand and enhance the coverage of terrestrial communication networks \cite{mozaffari2019tutorial,ndiaye2022age}. In practice, the connection between ground base stations and users can often be obstructed by buildings or obstacles, resulting in a significant reduction in coverage \cite{hammouti2018air}. To address this issue, UAVs equipped with reconfigurable intelligent surfaces (RIS) can be deployed to effectively redirect signals towards obstructed users \cite{wu2021intelligent,liaskos2018new}. This has led to the emergence of the concept of aerial RIS (ARIS) which aims to dynamically position the RIS in the 3D space to adapt to the complex wireless communication constraints \cite{ye2022nonterrestrial}.

 Several works have
studied ARIS communication systems where various performance metrics have been optimized. For example, in~\cite{lu2021aerial}, an ARIS is employed with the objective of improving the worst signal-to-noise ratio (SNR) through joint optimization of transmit power, RIS placement, and 3D passive beamforming. Similarly, in~\cite{long2020reflections}, a UAV equipped with a RIS is deployed to address both the security and energy efficiency of a wireless network where users are widely dispersed. In~\cite{jeon2022energy}, the scenario of a high aerial platform (HAP) enabled by a RIS is studied. The HAP is used to redirect backhaul signals to UAVs. To achieve this, the authors propose a joint optimization approach, while considering the positioning of the HAP-mounted RIS, as well as the phase adjustments of RIS elements to improve the energy efficiency of the system.  The case of multiple ARISs is studied in~\cite{aung2022energy} where the authors propose deep reinforcement learning to jointly optimize the ARISs placement, the
phase shift, and power control. 

The previously cited works have focused on static ARIS, where the aerial platform remains in a fixed 3D position to provide service to a set of users. 
However, another significant scenario involves employing ARIS while a UAV is in motion, engaged in a particular mission. In this dynamic context, ARIS leverages the UAV's ability to move in the 3D space to redirect the signals to a broader range of users, thereby enhancing network connectivity and coverage. In this regard, only a handful of works can be found where the trajectory of the UAV equipped with RIS is optimized along with the RIS phase shift~\cite{liu2021elevation,duo2023joint}. In~\cite{liu2021elevation}, the authors investigate a scenario in which an ARIS is used to facilitate the communication between two users. They propose an iterative approach to maximize the minimum average
achievable rate by jointly optimizing the 3D UAV trajectory and the RIS phase shift. Another scenario involving two UAVs is considered in~\cite{duo2023joint}, with one of them equipped with a RIS. The UAVs are employed to offload computing tasks from ground devices to an access point.
To achieve this objective, the authors jointly optimize the trajectories of the two UAVs, the phase shift of the ARIS, and the computation resources with the aim of improving the energy efficiency of the system. Although promising, the proposed trajectory optimization approaches do not involve the dynamic model of the UAV. Specifically, the drone is assumed to travel with a fixed speed during the entire flight which does not align with the realistic behavior of UAVs.

In this work, we design the trajectory of a UAV that travels from a starting point to a target destination while carrying a RIS. Our objective is to maximize the energy efficiency of the drone while taking into account its dynamic model including its velocity and acceleration along with the phase shift of the RIS. Our contributions can be summarized as follows. 

\begin{itemize}
    \item We formulate the energy efficiency problem by jointly optimizing the UAV's trajectory, its acceleration and velocity profiles, as well as the phase shifts during the UAV's mission.

\item  We relax the continuous-time problem into a discrete-time one. Then, to address the non-convexity of the discrete-time problem, we introduce an efficient solution that involves decoupling the phase shift optimization and the trajectory design.
\item To address the phase shift optimization, a convex approximation is applied which yields a closed-form expression of the phase shifts with respect to the UAV's position at each time step. 
\item The energy efficiency problem is then reduced to a trajectory optimization under dynamic model constraints, which is solved using tools inspired from economic model prediction control (EMPC). 
\item Our simulation results show a significant improvement in energy efficiency against the scenario where the dynamic model of the UAV is ignored. 
\end{itemize}

The remainder of this paper is organized as follows. The system model is described in section II. In section III, we formulate the studied problem as an energy efficiency maximization under the UAV's dynamic model constraints. Our proposed approach is described in section IV. Simulation results are provided and analyzed in section V. Finally, section VI draws the conclusions of our paper.

\textit{Notations}: In this paper, $x$ is a scalar, $\boldsymbol{x}$ or $\boldsymbol{X}$ is a vector or matrix. $\boldsymbol{X}^T$, $\boldsymbol{X}^*$, $\boldsymbol{X}^{-1}$ denote the transpose, Hermitian, and pseudo-inverse of $\boldsymbol{X}$, respectively.  $|.|$ denotes the modulus of a complex number
 and $||.||^2$ denotes the $l_2$-norm of
the vector. $j=\sqrt{-1}$ denotes the imaginary unit.

%In \cite{7} the scenario is more complicated; the authors introduce an innovative wireless architecture that utilizes a high-altitude aerial platform enabled by RIS. The focus is on a situation where a sudden surge in urban traffic prompts authorities to swiftly deploy unmanned aerial vehicle base stations (UAV-BSs) to cater to ground users. In this particular context, when obstacles in the urban environment obstruct the direct backhaul link from the ground source, they redirect the backhaul signal by employing an aerial RIS. This redirection ensures the successful transmission of the signal to the UAV-BSs. The authors employ a joint optimization approach, considering both the positioning and configuration of the aerial RIS, as well as the phase adjustments of individual RIS elements. This optimization effort results in enhanced energy efficiency for each UAV-BS.\\ 

%In this context, ARIS systems have attracted the attention of researchers recently because of their ability to steer the signals in an isotropic direction as opposed to terrestrial RIS which can only reflect the signal from the source to the destination on the same side of the RIS. The deployment of an ARIS offers enhanced flexibility as the  

\section{System Model}

Consider a downlink communication between a base station (BS) equipped with a uniform linear array (ULA) of $M$ antennas and $K$ users. We suppose that the direct link between the BS and users is obstructed due to the presence of obstacles. To overcome this problem, a RIS consisting of ULA with $N$ reflecting elements, deployed on a drone, is employed to redirect the signals towards obstructed users on the ground. We also suppose that the drone travels from an initial point to a target destination while simultaneously establishing a link between the BS and the obstructed users along its trajectory. An illustration of the system model is shown in Figure \ref{figSys}. 

%Our objective is to optimize the trajectory of the UAV in order to maximize its energy efficiency, within a given finite time horizon $T$.
%\begin{comment}

\begin{figure}
    \centering
    \includegraphics[scale=0.4]{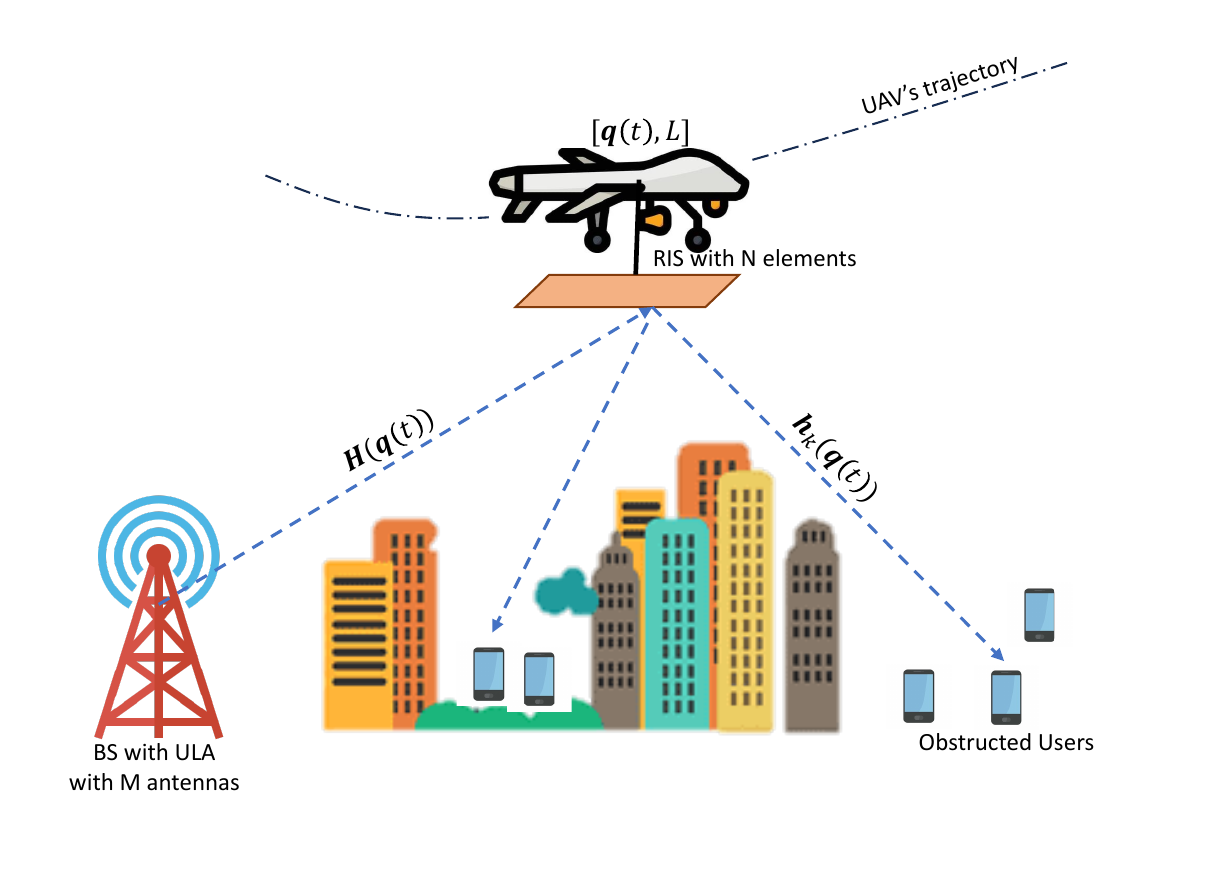}
    \caption{System model.}
    \label{figSys}
\end{figure}
%\end{comment}
Without loss of generality, we assume a 3D Cartesian coordinate system where the BS is located at the origin. The ARIS lies within a horizontal plane that is parallel to the XY plane. We assume that during its trajectory, the movement of the UAV does not change the horizontal orientation of the RIS \footnote{This is possible using omnidirectional motor aerial vehicles (MRAVs) \cite{Omnidirectional2024}.}.
We also suppose that the UAV flies horizontally at a fixed altitude $L$ and denote its position at time $t$ by $\boldsymbol{\rho}^{\rm UAV}(\boldsymbol{q}(t))=[\boldsymbol{q}(t)^T, L]^T$ with $\boldsymbol{q}(t)$ its 2D coordinates. Similarly, we denote by $\boldsymbol{\rho}^{\rm User}_{k}$ the 3D coordinates of user $k$ which we assume fixed over the time horizon $[0,T]$ for a given $T>0$.% In the following, we will present the mathematical models for the components of the problem under consideration in this paper, specifically the channels (between the BS, UAV and the users), the signal-to-noise ratio, and the energy efficiency.
%\addtocounter{figure}
%\begin{figure}[H]
%\centering
%\includegraphics[width=15cm]{scenario.png}
%\caption{System model}
%\end{figure}

\subsection{Channel Model}

We assume that the communication links between the BS and the UAV, and between the UAV and the ground users are predominantly characterized by the line-of-sight (LoS) channels. As a result, the time-varying channel between the BS and the UAV at time $t$, $H(\boldsymbol{q}(t)) \in \mathbb{C}^{N\times M}$, and between the UAV and user $k$, $h_k(\boldsymbol{q}(t))\in \mathbb{C}^{1\times N}$, can be modeled as follows \cite{jeon2022energy}
\begin{equation}\label{H}
     \boldsymbol{H}\!(\boldsymbol{q}(t))\!= 
   \alpha\big(\boldsymbol{q}(t)\big) \boldsymbol{{\Gamma}}_{RIS}\Big(\phi\big(\boldsymbol{q}(t)\big)\Big) \boldsymbol{\Gamma}_{BS}^*\Big(\kappa\big(\boldsymbol{q}(t)\big)\Big), 
\end{equation}
\begin{equation}\label{h}
    \boldsymbol{h}_k^*(\boldsymbol{q}(t))=\beta_k\big(\boldsymbol{q}(t)\big)\boldsymbol{\Gamma}_{RIS}^*\big(\mu_k\big(\boldsymbol{q}(t)\big)\big),
\end{equation}

\noindent with
$
     \alpha\big(\boldsymbol{q}(t)\big)=\frac{\sqrt{l_0}}{{\parallel}\boldsymbol{\rho}^{\rm UAV}(\!\boldsymbol{q}(t)\!){\parallel}_{2}} 
     e^{j\phi_{H}} e^{-j \frac{2\pi {\parallel}\boldsymbol{\rho}^{\rm UAV}(\boldsymbol{q}(t)){\parallel}_{2}}{\lambda}},$
and $   \beta_k\big(\!\boldsymbol{q}(t\!)\!\big)\!=\!\frac{\sqrt{l_0}}{{\parallel}\boldsymbol{\rho}^{\rm UAV}(\boldsymbol{q}(t))-\boldsymbol{\rho}_k^{\rm User}{\parallel}_{2}}\! 
 e^{j\phi_{h}}\! e^{-j \frac{2\pi {\parallel}\boldsymbol{\rho}^{\rm UAV}(\boldsymbol{q}(t))-\boldsymbol{\rho}_k^{\rm User}{\parallel}_{2}}{\lambda}},$
\\

\noindent where $l_{0}$ is the reference path loss at a link distance of
$1$m, $\phi_{H}$ and $\phi_{h}$ are independent random phases in $[0,2\pi]$,  $\boldsymbol{\Gamma}_{RIS}(.) \in \mathbb{C}^{N\times 1}$ and $\boldsymbol{\Gamma}_{BS}(.) \in \mathbb{C}^{M\times 1}$ are the array response of the RIS and the BS, respectively, which depend on the angle of arrival (AoA) between the BS and the RIS at time $t$ given by $\phi\big(\boldsymbol{q}(t)\big) \in [0,2\pi]$, the angle of 
departure
(AoD) between the BS and the RIS denoted by $\kappa\big(\boldsymbol{q}(t)\big) \in [0,2\pi]$, and the AoD between the RIS and user $k$ given by $\mu_k\big(\boldsymbol{q}(t)\big) \in [0,2\pi]$. Specifically, the response arrays are modeled as follows

\vspace{0.3cm}
$\boldsymbol{\Gamma}_{BS}(.)=[1, e^{-j\frac{2\pi d_{BS}sin(.)}{\lambda} },\dots,e^{-j\frac{2\pi (M-1)d_{BS}sin(.)}{\lambda} }]^T,$
 
$\boldsymbol{\Gamma}_{RIS}(.)=[1, e^{-j\frac{2\pi d_{RIS}sin(.)}{\lambda} },\dots,e^{-j\frac{2\pi (N-1)d_{RIS}sin(.)}{\lambda} }]^T,
$
with $d_{BS}$ and $d_{RIS}$ are the element separations in BS and RIS, respectively. 

For the sake of tractability, we assume that the channels $\boldsymbol{H}(\boldsymbol{q}(t))$, and $\{\boldsymbol{h}_k(\boldsymbol{q}(t))\}_{k=1}^K$ are known at the BS, which can be achieved using one of the techniques in \cite{wei2021channel}. We also suppose that the BS uses a frequency division multiple access (FDMA) and thus there is no interference at the receiver. 
\subsection{Signal-to-Noise-Ratio}

Let $\eta_k\big(\boldsymbol{q}(t),\boldsymbol{\Theta}(t)\big)$ be the signal-to-noise-ratio (SNR) of user $k$ at time $t$, which depends on the position of the ARIS $\boldsymbol{q}(t)$ and the phase shift $\boldsymbol{\Theta}(t)$. The SNR is described formally, for given $t\geq 0$ and $k \in \{1,..,K \}$, as follows:
\begin{equation}\label{InstantRatee}
   \eta_k\big(\boldsymbol{q}(t),\boldsymbol{\Theta}(t)\big)=\!\frac{P_k|\boldsymbol{h}^*_k(\boldsymbol{q}(t))\boldsymbol{\Theta(t)} \boldsymbol{H}(\boldsymbol{q}(t))\boldsymbol{w}_k(\boldsymbol{q}(t))|^2}{\sigma^2},
\end{equation}
where $\sigma^2$ is the variance of an additive white Gaussian noise, $P_k$ and $\boldsymbol{w}_k(\boldsymbol{q}(t))  \in \mathbb{C}^{M\times 1}$ are the transmit power and the precoding vector intended to user $k$, and $\boldsymbol{\Theta}(t)=diag\Big((e^{j\theta_n(t)})_{n=0}^{N-1}\Big)$ is a diagonal phase shift matrix at time $t$ with $\theta_n(t)\in[0,2\pi]$ the phase shift of RIS element $n$.

We assume that a maximum ratio transmission (MRT) strategy is adopted. Consequently, the optimal precoding vector is given by lemma 1 in \cite{jeon2022energy}, as follows
\begin{comment}
\begin{equation}\label{v}  \boldsymbol{w}_k(\boldsymbol{q}(t))=\frac{\boldsymbol{\Gamma}_{BS}\Big(\kappa\big(\boldsymbol{q}(t)\big)\Big)}{{\parallel}\boldsymbol{\Gamma}_{BS}\Big(\kappa\big(\boldsymbol{q}(t)\big)\Big){\parallel}_2}.
\end{equation}
\end{comment}
\begin{equation}\label{v}  \boldsymbol{w}_k(\boldsymbol{q}(t))=\boldsymbol{\Gamma}_{BS}\Big(\kappa\big(\boldsymbol{q}(t)\big)\Big)/{\parallel}\boldsymbol{\Gamma}_{BS}\Big(\kappa\big(\boldsymbol{q}(t)\big)\Big){\parallel}_2.
\end{equation}

Note that, under the MRT strategy, the precoding vector is independent of the users' positions and is determined by the AoD between the BS and the RIS, as demonstrated in~\cite{jeon2022energy}.

\subsection{Energy Efficiency}

In this paper, we are interested in the energy efficiency of a fixed-wing drone traveling from an initial point to a target destination. While traveling, the drone carrying a RIS establishes a link between the BS and the obstructed users along with its trajectory. Our aim is to maximize the transmitted data received by the users during the drone's flight while minimizing its consumed energy. Specifically, the total amount of bits that can be transmitted from the BS to a user $k$ during the time horizon $[0,T]$, ${R}_k(\!\boldsymbol{q},\boldsymbol{\Theta}\!)$, is dependent on the UAV's trajectory and phase shift functions, and can be expressed as
\begin{equation} \label{rate}     
    {R}_k(\!\boldsymbol{q},\boldsymbol{\Theta}\!)= \!\frac{\mathcal{B}}{K}\!\!\int_0^T\!\!\log_2\!\left(\!1+\eta_k\big(\!\boldsymbol{q}(t),\boldsymbol{\Theta}(t)\!\big)\right)dt,
\end{equation}
where $\mathcal{B}$ is the transmission bandwidth equally divided between the $K$ users.

Moreover, the total energy consumption of the UAV involves two main components. The first component is associated with phase switching and results from the switching of RIS phases. The second component, known as the propulsion energy, is essential to support the mobility of the drone. In practical scenarios, the energy consumed due to phase switching is significantly smaller compared to the UAV's propulsion energy. Therefore, for the purpose of this paper, we neglect the phase switching-related energy and focus on the propulsion energy during time period T, which is given by \cite{zeng2017energy}

\begin{equation}\label{energy}
\begin{array}{ll}
   {E}(\boldsymbol{q})&=\!\int_{0}^{T}\!\! \Bigg( \!c_{1}\!\!\parallel\! \boldsymbol{v}(t)\!\parallel^{3}_2\!+\!\frac{c_{2}}{\parallel\! \boldsymbol{v}(t)\!\parallel}_2 \!(1+\frac{\parallel \!a(t)\!\parallel^{2}_2-\frac{(\boldsymbol{a}^{T}(t)\boldsymbol{v}(t))^{2}}{\parallel \boldsymbol{v}(t) \parallel^{2}_2}}{g^{2}}\! )\! \Bigg)dt \\&+ \frac{1}{2}m\big(\parallel \boldsymbol{v}(T)\parallel^{2}_2-\parallel \boldsymbol{v}(0)\parallel^{2}_2\big),
    \end{array}
\end{equation}
where $t \mapsto\boldsymbol{v}(t)\ \in \mathbb{R}^2 $ is the velocity vector of the drone and $t\mapsto \boldsymbol{a}(t) \in \mathbb{R}^2$  
its acceleration vector. The velocity and accelerations of the drone evolve according to the following dynamical model\footnote{Let us mention that the proposed framework in this paper can be generalized to any other type of drones with more complex nonlinear dynamics~\cite{elmokadem2021towards}.}: 
\begin{align}
\label{eqn:cont}
    \boldsymbol{v}(t)=\dot {\boldsymbol{q}}(t),\;
    \ \boldsymbol{a}(t)=\ddot {\boldsymbol{q}}(t).
\end{align}
The parameters $c_{1}$ and $c_{2}$ are associated with factors such as weight, wing area, and air density. Additionally, $g$ represents the gravitational acceleration, and $m$ corresponds to the mass of the drone.

Finally, the energy efficiency of the ARIS is given by the sum of the transmitted amount of bits of all users during the time interval $[0,T]$ over its consumed energy, i.e., 

\begin{equation}
    EE(\boldsymbol{q},\boldsymbol{\Theta})=\sum \limits_{k=1}^K {R}_k(\boldsymbol{q},\boldsymbol{\Theta})/{E}(\boldsymbol{q}).
\end{equation}

In the next section, we set the mathematical formulation of the considered optimization problem under the constraints of the dynamic model of the drone.

\section{Problem Formulation}
Our main objective is to jointly determine the UAV's trajectory, its acceleration and velocity profiles, as well as the phase shifts during the UAV's mission in order to maximize its energy efficiency. The problem is formulated considering the constraints imposed by the UAV's dynamic model, constraints on the maximum acceleration and velocity, and the constraints on the initial and final positions. Consequently, our optimization problem is formulated as follows

\begin{maxi!}|s|
{\boldsymbol{q},\boldsymbol{\Theta},\boldsymbol{v},\boldsymbol{a}}{ EE(\boldsymbol{q},\boldsymbol{\Theta})}{}{}
\addConstraint{ \theta_n(\!t\!) \in [0\!,\!2\pi\!] \; \forall n \in \{\!0,\!..\!,\!N\!\!-\!1\!\}, \forall t \in[0,T] \label{Const2}}{}{}
\addConstraint{\boldsymbol{v}(t)=\dot {\boldsymbol{q}}(t) \quad \forall t \in[0,T]\label{Const3}}{}{}
\addConstraint{\boldsymbol{a}(t)=\ddot {\boldsymbol{q}}(t) \quad \forall t \in[0,T]\label{Const4}}{}{}
\addConstraint{\parallel\!\boldsymbol{v}(t)\!{\parallel}_2\!\leq \!v^{\rm max},\parallel\!\!\boldsymbol{a}(t)\!{\parallel}_2\!\!\leq\! a^{\rm max}\; \forall t \in[0,\!T]\label{Const5}}{}{}
\addConstraint{\boldsymbol{q}(0)=\boldsymbol{u}_0, \boldsymbol{q}(T)=\boldsymbol{u}_T.\label{Const7}}{}{}
\end{maxi!}
where $v^{max}$, $a^{max}$ are the maximum velocity and maximum acceleration, respectively, and $\boldsymbol{u}_0$ and $\boldsymbol{u}_T$ are initial and target positions of the UAV.

Directly solving the studied optimization is significantly challenging for many reasons. First, the optimization variables $\boldsymbol{q}$ and $\boldsymbol{\Theta}$ are tightly coupled. Second,
it necessitates the optimization of the continuous function $\boldsymbol{q}(t)$, along with its first- and second-order derivatives $\boldsymbol{v}(t)$ and $\boldsymbol{a}(t)$. Finally, the objective function in $EE\big(\boldsymbol{q},\boldsymbol{\Theta}\big)$ is defined as the ratio of two integrals, both of which lack closed-form expressions. In the following, we present an effective solution that addresses the problem through the use of convex approximation and EMPC.
\section{Phase Shift Optimization and Trajectory Design for an ARIS}
In this section, we start by transforming the continuous-time problem above into a discrete-time problem. Then, we introduce an efficient solution that involves solving the phase shift optimization as a convex approximation. This yields a closed-form expression for the phase shift, which is subsequently integrated into the trajectory design problem. The resulting optimization is then solved using an EMPC-inspired approach.

\subsection{Discrete-time optimization problem}

 Consider the continuous-time linear dynamical system in (\ref{eqn:cont}) and consider a finite time step $\delta_{t}>0$, we have the following results based on the first- and second-order Taylor approximations. By considering $t=s\delta_{t}$ and $s=0,1,...,S+1$, we can write %$\boldsymbol{q}[s]=\boldsymbol{q}(s\delta_{t})$ , $\boldsymbol{v}[s]=\boldsymbol{v}(s\delta_{t})$ and $\boldsymbol{a}[s]=\boldsymbol{a}(s\delta_{t})$. As a result, we have the following discrete state-space model
\begin{equation}\label{Veloc}
    \boldsymbol{v}[s+1]= \boldsymbol{v}[s]+\boldsymbol{a}[s]\delta_{t} \hspace*{2,53cm} \forall s \in \{0,..,S\},
\end{equation}
\begin{equation}\label{Acc}
    \boldsymbol{q}[s+1]=\boldsymbol{q}[s]+\boldsymbol{v}[s]\delta_{t} +\frac{1}{2}\boldsymbol{a}[s]\delta_{t}^{2}, \quad \forall s \in \{0,..,S\}.
\end{equation}

We also discretize the integrals in equations (\ref{rate}) and (\ref{energy}). Consequently, the discrete transmitted amount of data and energy are given by  

\begin{equation} \label{rate2}     
    \bar{R}_k(\!\boldsymbol{q},\boldsymbol{\Theta})\!= \!\frac{\mathcal{B}}{K}\!\!\sum_{s=0}^{S}\!\!\log_2\!\left(\!1+\eta_k\big(\!\boldsymbol{q}[s],\boldsymbol{\Theta}[s]\!\big)\right),
\end{equation}
\begin{equation}\label{energy2}
\begin{array}{ll}
   \bar{E}(\boldsymbol{q})=&\!\!\sum\limits_{s=0}^{S}\!\! \Bigg( \!\!c_{1}\!\parallel \!\boldsymbol{v}[s]\!\parallel^{3}_2+\frac{c_{2}}{\parallel \boldsymbol{v}[s]\parallel}_2\! (1+\frac{\parallel a[s]\parallel^{2}_2-\frac{(\boldsymbol{a}^{T}[s]\boldsymbol{v}[s])^{2}}{\parallel \boldsymbol{v}[s] \parallel^{2}_2}}{g^{2}} )\!\! \Bigg) +\\& \frac{1}{2}m\big(\parallel \boldsymbol{v}[S]\parallel^{2}_2-\parallel \boldsymbol{v}(0)\parallel^{2}_2\big).
    \end{array}\end{equation}
Accordingly, the discrete energy efficiency is written as 
\begin{equation}
       \bar{EE}(\boldsymbol{q},\boldsymbol{\Theta})=\sum \limits_{k=1}^K \bar{R}_k(\boldsymbol{q},\boldsymbol{\Theta})/\bar{E}(\boldsymbol{q}).
\end{equation}
We also discretize constraint (\ref{Const5}) which becomes
\begin{equation}\label{equaV}
    \parallel v[s]\parallel_2 \leq v^{\rm max}, \text{and} \parallel a[s]\parallel_2 \leq a^{\rm max} \;\forall s \in \{0,.., S\}
\end{equation}

%\begin{equation}\label{equaA}
%    \parallel a[s]\parallel_2 \leq a^{\rm max}, \forall s \in \{0,.., S\}.
%\end{equation}
Finally, under the discrete-time approximation, the studied optimization problem can be rewritten as follows.
\begin{maxi!} |s|
   {\boldsymbol{q} , \!\boldsymbol{\Theta}, \!\boldsymbol{v},\!\boldsymbol{a}}
   {\bar{EE}(\boldsymbol{q},\boldsymbol{\Theta})\label{maxii4}}{\label{maxiii4}}{}
   \addConstraint{ (\!\ref{Veloc}\!),\!(\!\ref{Acc}\!),\! (\!\ref{Const2}\!),(\!\ref{Const5}\!),\! (\!\ref{equaV}\!),\!(\!\ref{Const7}\!).  \label{Const16}}{}{}
\end{maxi!}
%\textcolor{blue}{These constrainsts are in continuous-time (\!\ref{Const5}\!),\!(\!\ref{Const6}\!)}

Directly solving the above optimization problem is
still challenging due to the intricate coupling between the trajectory and phase shift variables. To tackle this, we decouple the phase shift and trajectory design problems.

\subsection{Phase Shift Optimization}
In this subsection, we assume that the position of the UAV as well as its velocity and acceleration are known. This implies, that solving problem (\ref{maxiii4}) reduces to maximizing the discrete amount of transmitted data in (\ref{rate2}) with respect to the phase shift matrix at a given discrete time $s$. For ease of notation, we drop the dependency on $\boldsymbol{q}[s]$ in this subsection. Accordingly, the problem becomes 
\begin{comment}
\begin{maxi!} |s|
   {\boldsymbol{\Theta}[s]\in [0,2\pi]}
   {\sum \limits_{k=1}^K\bar{R}_k(\boldsymbol{\Theta}[s] )\label{maxii3}}{}{}
\end{maxi!}
\end{comment}
\begin{equation} \label{maxii3}
   \max_{\boldsymbol{\Theta}[s]\in [0,2\pi]}
   {\sum \limits_{k=1}^K\bar{R}_k(\boldsymbol{\Theta}[s] )}.
\end{equation}

To address this optimization problem, we first simplify the expression of the SNR under the MRT strategy. The resulting SNR expression, as a function of the phase shift matrix, is presented in the following lemma.  
\begin{lemma}\label{etalemma}
    Under MRT strategy, the SNR experienced by user $k$ at time $s$ is given by 
 \begin{equation}
\eta_k\big(\boldsymbol{\Theta}[s]\big)\!=\!C_k\!\!\left(\!N\!+\!\!\sum \limits_{\substack{(n,m) \\ n\neq m}}\!\!\!cos\Big(B_n^k\big(\boldsymbol{\Theta}[s]\big)\!-\!B_m^k\big(\boldsymbol{\Theta}[s]\big)\Big)\!\!\right) 
\end{equation}
where   $C_k=\frac{P_k}{\sigma^2}\left|\alpha\beta_k\right|^2$and $B_n^k\big(\!\boldsymbol{\Theta}[s]\!\big)\!=\!-\frac{2\pi}{\lambda}nd_{RIS}\Big(\!sin\big(\mu_k\big)\!-\!sin\big(\Phi\big)\!\Big)\!+\!\theta_n[s]$.
\end{lemma}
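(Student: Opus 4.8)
The plan is to evaluate the effective end-to-end scalar channel $\boldsymbol{h}_k^*\boldsymbol{\Theta}\boldsymbol{H}\boldsymbol{w}_k$ explicitly by substituting the rank-one channel models (\ref{H})--(\ref{h}) and the MRT precoder (\ref{v}), reduce it to a single sum of unit-modulus phasors indexed by the RIS elements, and then expand its squared modulus into the claimed sum of pairwise cosines.

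First I would dispose of the BS side. Substituting (\ref{H}) and (\ref{v}), the product $\boldsymbol{H}\boldsymbol{w}_k$ contains the factor $\boldsymbol{\Gamma}_{BS}^*(\kappa)\boldsymbol{w}_k = \boldsymbol{\Gamma}_{BS}^*(\kappa)\boldsymbol{\Gamma}_{BS}(\kappa)/\|\boldsymbol{\Gamma}_{BS}(\kappa)\|_2$. Since $\boldsymbol{\Gamma}_{BS}(\kappa)$ is a steering vector whose entries all have unit modulus, $\boldsymbol{\Gamma}_{BS}^*(\kappa)\boldsymbol{\Gamma}_{BS}(\kappa)=\|\boldsymbol{\Gamma}_{BS}(\kappa)\|_2^2$, so this factor collapses to the real, phase-shift-independent scalar $\|\boldsymbol{\Gamma}_{BS}(\kappa)\|_2$. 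This is precisely the simplification promised by MRT: the transmit side contributes only a constant gain, and $\boldsymbol{H}\boldsymbol{w}_k$ becomes proportional to $\alpha\,\boldsymbol{\Gamma}_{RIS}(\phi)$.

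Combining with (\ref{h}) then gives $\boldsymbol{h}_k^*\boldsymbol{\Theta}\boldsymbol{H}\boldsymbol{w}_k \propto \alpha\beta_k\,\boldsymbol{\Gamma}_{RIS}^*(\mu_k)\boldsymbol{\Theta}\boldsymbol{\Gamma}_{RIS}(\phi)$. Because $\boldsymbol{\Theta}=\mathrm{diag}\big((e^{j\theta_n})_{n=0}^{N-1}\big)$, writing out the $n$-th entries of the two RIS steering vectors turns this quadratic form into $\sum_{n=0}^{N-1} e^{jB_n^k}$, where $B_n^k$ gathers the RIS array phase $-\tfrac{2\pi}{\lambda}\,n\,d_{RIS}(\sin\mu_k - \sin\phi)$ and the tunable phase $\theta_n$, matching the definition in the statement. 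I would then take the squared modulus: $\big|\sum_{n} e^{jB_n^k}\big|^2 = \sum_{n}\sum_{m} e^{j(B_n^k - B_m^k)}$. Splitting off the $N$ diagonal terms ($n=m$, each equal to $1$) and using that the full double sum is real --- the off-diagonal exponentials pair into complex conjugates under $(n,m)\leftrightarrow(m,n)$, so their imaginary parts cancel and $e^{j(B_n^k-B_m^k)}$ may be replaced by $\cos(B_n^k-B_m^k)$ --- yields $N + \sum_{n\neq m}\cos(B_n^k - B_m^k)$. Multiplying by $P_k/\sigma^2$ and folding the remaining phase-independent magnitudes $|\alpha\beta_k|^2$ (together with the constant BS array gain) into $C_k$ gives the claimed formula.

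I do not expect a genuine obstacle here: the derivation is a direct substitution-and-expand argument, and the one conceptual step is recognizing that MRT renders the BS-to-RIS contribution a phase-independent scalar, which is exactly what decouples the phase-shift design from the transmit beamformer. The only points requiring care are bookkeeping details: fixing the sign convention in $B_n^k$ consistently with the conjugations appearing in $\boldsymbol{\Gamma}_{RIS}^*$, and ensuring that every phase-independent constant (in particular the BS array gain hidden inside $\|\boldsymbol{\Gamma}_{BS}(\kappa)\|_2$) is absorbed into $C_k$ rather than left dangling.
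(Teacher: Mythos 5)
Your proposal is correct and follows essentially the same route as the paper: substitute (\ref{H}), (\ref{h}) and (\ref{v}) into (\ref{InstantRatee}), collapse the BS side via MRT to reduce the effective channel to $\sum_n e^{jB_n^k}$, and expand the squared modulus into $N+\sum_{n\neq m}\cos(B_n^k-B_m^k)$ (the paper writes $|z|^2$ as $(\mathrm{Re}\,z)^2+(\mathrm{Im}\,z)^2$ and applies the cosine-difference identity, while you expand the double sum $\sum_{n,m}e^{j(B_n^k-B_m^k)}$ and pair conjugates --- trivially equivalent). You are in fact slightly more careful than the paper on one bookkeeping point: the residual BS array gain $\|\boldsymbol{\Gamma}_{BS}(\kappa)\|_2^2=M$ surviving from the MRT step, which the paper's stated $C_k=\frac{P_k}{\sigma^2}|\alpha\beta_k|^2$ silently omits.
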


\begin{proof}
 By using the expression of $\boldsymbol{w}_k$ in equation (\ref{v}), and plugging equations (\ref{H}) and (\ref{h}) in equation (\ref{InstantRatee}), we obtain 

%\begin{equation}\label{eta}
 %   \eta_k\big(\boldsymbol{\Theta}[s]\big)=\frac{P_k\left|\boldsymbol{h}^*_k(\boldsymbol{q}[s])\boldsymbol{\Theta}[s] \boldsymbol{H}(\boldsymbol{q}[s])\boldsymbol{a}_{BS}\Big(\kappa\big(\boldsymbol{q}[s]\big)\Big)\right|^2}{\sigma^2{\parallel}\boldsymbol{a}_{BS}\Big(\kappa\big(\boldsymbol{q}[s]\big)\Big){\parallel}_2^2}.
%\end{equation}

%We plug equations (\ref{H}) and (\ref{h}) in equation (\ref{eta}), and develop the expression to obtain

\begin{equation}\label{eta2}
    \eta_k\big(\!\boldsymbol{\Theta}[s]\!\big)\!\!=\!\!\frac{P_k}{\sigma^2}\!\!\left|\alpha\beta_k\!\right|^2\!\!
      \left|\!\sum \limits_{n=0}^{N-1}\!\!\!e^{-j\!\frac{2\pi}{\lambda}nd_{RIS}\!\big(\!sin(\!\mu_k\!)\!-\!sin(\!\Phi\!)\!\big)\!+\!j\theta_n[s]}\!\right|^2.    
\end{equation}
We denote $C_k=\frac{P_k}{\sigma^2}\left|\alpha\beta_k\right|^2$ and $B_n^k\big(\!\boldsymbol{\Theta}[s]\!\big)\!=-\!\frac{2\pi}{\lambda}nd_{RIS}\big(\!sin(\!\mu_k\!)\!-\!sin(\!\Phi\!)\!\big)\!+\!\theta_n[s]$, we have 
\begin{equation}\label{eta3}
\begin{array}{ll}
    \eta_k\big(\!\boldsymbol{\Theta}[s]\!\big)\!\!\!\!\!\!&=\!C_k
      \left|\sum \limits_{n=0}^{N-1}\!\!e^{jB_n^k\big(\!\boldsymbol{\Theta}[s]\!\big)}\right|^2\\
      &\stackrel{(a)}{=}\!C_k
      \!\!\!\left(\!\!\left(\!\sum \limits_{n=0}^{N-1}\!\!cos\Big(\!B_n^k\big(\!\boldsymbol{\Theta}[s]\!\big)\!\Big)\!\!\right)^2\!\!\!\!+\!\!\left(\!\sum \limits_{n=0}^{N-1}\!\!sin\Big(\!\!B_n^k\big(\!\boldsymbol{\Theta}[s]\!\big)\!\Big)\!\!\right)^2\right)\\
      &\stackrel{(b)}{=}C_k\!\!\left(\!\!N\!\!+\!\!\!\!\sum \limits_{\substack{(n,m) \\ n\neq m}}\!\!\!cos\Big(B_n^k\big(\boldsymbol{\Theta}[s]\big)\!-\!B_m^k\big(\boldsymbol{\Theta}[s]\big)\!\Big)\!\!\right), 
\end{array}\end{equation}
where $(a)$ comes from the definition of the norm of a complex number, whereas $(b)$ stems from a direct application of trigonometric formula. 
\end{proof}
%When we replace the expression of $ \eta_k\big(\!\boldsymbol{\Theta}[s]\!\big)$ given by lemma \ref{etalemma} in the equation of the data rate (\ref{rate2}), we can easily notice that the resulting data rate function is non-concave.  
% To tackle the non-concavity of the data rate function, we introduce a tight lower bound. 
Upon substituting the expression for $ \eta_k\big(\boldsymbol{\Theta}[s]\big)$ as provided in lemma \ref{etalemma} into equation (\ref{rate2}), it is clear that the resulting function is non-concave. To tackle this problem, we introduce a tight lower bound as stated by the following lemma.

 \begin{lemma}\label{DataRateLemma}

 The transmitted data at
user $k$ during time step $s$ is lower bounded as follows  \begin{equation}\label{ineqdata}
\begin{array}{ll}
     \bar{R}_k(\boldsymbol{\Theta}[s])&\geq
   \gamma_k\!\left(\!N^2\!-\!\frac{1}{2}\sum \limits_{\substack{(n,m) \\ n\neq m}}\!\Big(B_n^k\big(\boldsymbol{\Theta}[s]\big)\!-\!B_m^k\big(\boldsymbol{\Theta}[s]\big)\Big)^2 \right)\\
   &\triangleq \bar{R}_{k}^{lb}\big(\boldsymbol{\Theta}[s]\big),
   \end{array}
\end{equation}
where $\gamma_k=\frac{\mathcal{B}C_k}{ln(2)K(1+C_kN^2)}$.
 \end{lemma}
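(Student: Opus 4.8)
The plan is to prove the bound in two independent stages that mirror the two nonlinearities in $\bar{R}_k(\boldsymbol{\Theta}[s])=\tfrac{\mathcal{B}}{K}\log_2\!\big(1+\eta_k(\boldsymbol{\Theta}[s])\big)$ (the single time-step term of \eqref{rate2}): first I replace the non-concave trigonometric sum inside $\eta_k$ by a quadratic lower surrogate, and then I bound the logarithm from below by a linear function of that surrogate, calibrated so the leading constant collapses to exactly $\gamma_k$.

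For the first stage I would start from the SNR expression in Lemma~\ref{etalemma} and apply the elementary inequality $\cos(x)\ge 1-\tfrac{x^2}{2}$, valid for every real $x$, to each summand with $x=B_n^k(\boldsymbol{\Theta}[s])-B_m^k(\boldsymbol{\Theta}[s])$. Since there are exactly $N(N-1)$ ordered pairs $(n,m)$ with $n\neq m$, summing gives $\sum_{n\neq m}\cos(\cdots)\ge N(N-1)-\tfrac12\sum_{n\neq m}(B_n^k-B_m^k)^2$. Adding back the standalone $N$ inside the parentheses turns $N+N(N-1)$ into $N^2$, so that $\eta_k(\boldsymbol{\Theta}[s])\ge \eta_k^{lb}:=C_k\big(N^2-\tfrac12\sum_{n\neq m}(B_n^k-B_m^k)^2\big)$. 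The only things to check here are the pair count and that the factor $C_k>0$ preserves the direction of the inequality.

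The second stage is where the real care is needed. I would use two facts: $\eta_k\ge 0$, and $\eta_k$ is bounded above by its coherent-combining maximum $C_kN^2$ (attained when all phase differences vanish, i.e.\ all $B_n^k$ coincide). I then chain the standard bound $\ln(1+x)\ge \tfrac{x}{1+x}$, valid for $x\ge 0$, with two monotonicity replacements: enlarging the denominator $1+\eta_k$ to $1+C_kN^2$, and lowering the numerator from $\eta_k$ to $\eta_k^{lb}$. This yields $\ln(1+\eta_k)\ge \eta_k^{lb}/(1+C_kN^2)$. Converting to base two via $\log_2=\ln/\ln 2$ and multiplying by $\mathcal{B}/K$ produces $\tfrac{\mathcal{B}C_k}{\ln(2)K(1+C_kN^2)}\big(N^2-\tfrac12\sum_{n\neq m}(B_n^k-B_m^k)^2\big)=\gamma_k\big(N^2-\tfrac12\sum_{n\neq m}(B_n^k-B_m^k)^2\big)$, which is exactly $\bar{R}_k^{lb}(\boldsymbol{\Theta}[s])$.

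The main obstacle is not any single inequality but the bookkeeping that forces the constant to be precisely $\gamma_k$: one must select the particular logarithmic lower bound $x/(1+x)$ rather than, say, the chord bound, and deliberately weaken $1+\eta_k$ to $1+C_kN^2$ using the maximal SNR, because that is exactly the denominator appearing in $\gamma_k$. I would also track the sign conditions carefully, namely $\eta_k\ge 0$ (so that enlarging the denominator genuinely decreases the ratio) and $\eta_k\ge\eta_k^{lb}$ (for the numerator replacement), since these are what guarantee all the chained inequalities point the same way. Finally, the tightness asserted in the statement can be read off from the observation that every replacement becomes an equality in the fully coherent regime where all phase differences vanish.
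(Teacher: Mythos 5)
Your proposal is correct and follows essentially the same route as the paper: the bound $1-\tfrac{x^2}{2}\le\cos(x)$ applied to the SNR from Lemma~\ref{etalemma}, combined with $\tfrac{x}{\ln(2)(1+x)}\le\log_2(1+x)$ and the replacement of the denominator $1+\eta_k$ by $1+C_kN^2$ via the coherent-combining maximum $\eta_k\le C_kN^2$. Your version merely makes explicit the bookkeeping the paper leaves implicit (the pair count $N+N(N-1)=N^2$, the sign conditions $\eta_k\ge 0$ and $\eta_k\ge\eta_k^{lb}$ needed to chain the inequalities, and the tightness in the coherent regime), which is a faithful elaboration rather than a different argument.
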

\begin{proof}
    The inequality can be proven by using the first-order Taylor inequalities of the cosine and log functions. Specifically, 
    $\frac{x}{ln(2)(1+x)}\leq log_2(1+x), \; x\geq 0$ and $1-\frac{x^2}{2}\leq cos(x), \; x \in \mathbb{R}$.
    Accordingly, we have 
    \begin{equation}\label{ineqeta1}
  \frac{\eta_k(\boldsymbol{\Theta}[s])}{ln(2)(1+C_k  N^2)}\leq log_2(1+\eta_k(\boldsymbol{\Theta}[s])) 
\end{equation}
and
\begin{equation}\label{ineqeta2}
   1-\frac{1}{2}\!\left(B_n^k\big(\boldsymbol{\Theta}[s]\big)\!-\!B_m^k\big(\boldsymbol{\Theta}[s]\big)\right)^2\! \!\leq \!cos\Big(\!B_n^k\big(\boldsymbol{\Theta}[s]\big)\!-\!B_m^k\big(\boldsymbol{\Theta}[s]\big)\!\Big). 
\end{equation}

By combining the two inequalities with the expression of the transmitted amount of data, we obtain inequality (\ref{ineqdata}).
\end{proof}

We note that the lower-bound is tight for $B_n^k\big(\boldsymbol{\Theta}[s]\big)\equiv 0 \;mod(2\pi)$ for all $n$ and $k$. Hence, to find the phase shift matrix, we solve the following approximated optimization
\begin{comment}
\begin{mini!} |s|
   {\boldsymbol{\Theta}[s]}
   {\sum\limits_{k=1}^K\sum \limits_{\substack{(n,m) \\ n\neq m}}\!\gamma_k\Big(B_n^k\big(\boldsymbol{\Theta}[s]\big)\!-\!B_m^k\big(\boldsymbol{\Theta}[s]\big)\Big)^2 \label{maxii4}}{\label{maxii44}}{}
   \addConstraint{ (\ref{Const2})  \label{Const17}}{}{}.
\end{mini!}
\end{comment}

\begin{equation} \label{maxii44}
   \max_{\boldsymbol{\Theta}[s] \in [0,2\pi]}
   {\sum\limits_{k=1}^K\sum \limits_{\substack{(n,m) \\ n\neq m}}\!\gamma_k\Big(B_n^k\big(\boldsymbol{\Theta}[s]\big)\!-\!B_m^k\big(\boldsymbol{\Theta}[s]\big)\Big)^2}.
\end{equation}

%The problem (\ref{maxii44}) is a linear least squares, which yields the derivation of a closed-form expression for the phase shift matrix. 
The following theorem provides the optimal phase shift matrix for the convex approximation.% of the discrete transmitted data function. 

\begin{theorem}
    The optimal phase shift matrix that minimizes problem (\ref{maxii44}) is given by 

    \begin{equation}\label{eq:optimalTheta}
        \boldsymbol{\Theta}^*[s]=\left(\sum \limits_{k=1}^K\gamma_k A^TA\right)^{-1}\left(\sum \limits_{k=1}^K\gamma_k A^T\boldsymbol{b}^k\right),
    \end{equation}
    where $A\in \mathbb{R}^{\frac{N(N-1)}{2}\times N}=(A^1,\dots,A^N)^T$, and $$A^i=\left[\boldsymbol{0}_{(N-i)\times (i-1)}, \bold{1}_{(N-i) \times 1}, -\boldsymbol{I}_{(N-i)\times (N-i)}\right],$$ with $\boldsymbol{0}_{(N-i)\times (i-1)}$ the matrix of size $(N-i)\times (i-1)$ with zeros everywhere, $\bold{1}_{(N-i) \times 1}$ is the column vector of size $N-i$ with entries equal to $1$, and $\boldsymbol{I}_{(N-i)\times (N-i)}$ is the identity of size $N-i$, and $\boldsymbol{b}^k$ is matrix of size $N\times N$ where $$b^k_{n,m}=\!\frac{2\pi}{\lambda}(m-n)d_{RIS}\big(\!sin(\!\mu_k\!)\!\big).$$
\end{theorem}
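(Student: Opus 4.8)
The plan is to read (\ref{maxii44}) as a weighted linear least-squares problem in the $N$ phases and to produce its closed form from the normal equations. \emph{Step 1 (reduction).} By Lemma \ref{DataRateLemma}, maximizing the tight lower bound $\bar{R}_k^{lb}$ over all users is equivalent to \emph{minimizing} the quadratic penalty $g(\boldsymbol{\theta}) = \sum_{k=1}^K \gamma_k \sum_{n\neq m}\big(B_n^k-B_m^k\big)^2$, where $\boldsymbol{\theta}=(\theta_0,\dots,\theta_{N-1})^T$ collects the diagonal phases of $\boldsymbol{\Theta}[s]$. Using the form of $B_n^k$ from Lemma \ref{etalemma}, each residual separates into a decision part and a known part,
\[
B_n^k-B_m^k=(\theta_n-\theta_m)-\tfrac{2\pi}{\lambda}(n-m)d_{RIS}\big(\sin\mu_k-\sin\Phi\big),
\]
so $g$ is a sum of squared affine residuals in $\boldsymbol{\theta}$ with positive weights $\gamma_k>0$, i.e. a least-squares objective.

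\emph{Step 2 (matrix form).} I would then check that the block-stacked matrix $A=(A^1,\dots,A^N)^T$ enumerates each difference $\theta_n-\theta_m$, $n<m$, exactly once: the block $A^i=[\boldsymbol{0},\boldsymbol{1},-\boldsymbol{I}]$ contributes the rows $\theta_i-\theta_{i+1},\dots,\theta_i-\theta_N$, giving $\sum_{i=1}^{N}(N-i)=\tfrac{N(N-1)}{2}$ rows in total. Reading the known terms off the antisymmetric array $b^k_{n,m}$ into a vector $\boldsymbol{b}^k$ of matching length, the objective becomes, up to the harmless factor $2$ from passing from ordered to unordered index pairs,
\[
g(\boldsymbol{\theta})=2\sum_{k=1}^{K}\gamma_k\,\big\|A\boldsymbol{\theta}-\boldsymbol{b}^k\big\|_2^2 .
\]
Aligning the signs and row order of $A\boldsymbol{\theta}$ with those of $\boldsymbol{b}^k$ is mechanical bookkeeping.

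\emph{Step 3 (solution).} Because every $\gamma_k>0$, the Hessian $4\sum_k\gamma_k A^TA$ is positive semidefinite, so stationarity is necessary and sufficient for global optimality. Setting $\nabla g=\boldsymbol{0}$ gives the normal equations $\big(\sum_k\gamma_k A^TA\big)\boldsymbol{\theta}=\sum_k\gamma_k A^T\boldsymbol{b}^k$, from which (\ref{eq:optimalTheta}) follows by inversion. The delicate point is that $A^TA$ is \emph{singular}: its null space is exactly $\mathrm{span}\{\boldsymbol{1}\}$, since only a constant phase profile annihilates all differences, so $A$ has rank $N-1$. The system is nevertheless consistent, because $A^T\boldsymbol{b}^k$ always lies in $\mathrm{range}(A^T)=\mathrm{range}(A^TA)$; its solution set is the affine line $\{\boldsymbol{\theta}^\star+c\boldsymbol{1}\}$, and choosing the minimum-norm representative through the Moore--Penrose pseudo-inverse (the operator written $(\cdot)^{-1}$ in the Notations) yields exactly (\ref{eq:optimalTheta}).

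I expect the main obstacles to be the two issues tied to this rank deficiency. First, one must justify the pseudo-inverse step rigorously: establish consistency $A^T\boldsymbol{b}^k\in\mathrm{range}(A^TA)$ and confirm that the minimum-norm solution is a genuine global minimizer rather than an artifact of inverting a singular matrix. Second, the closed form ignores the box constraint $\theta_n\in[0,2\pi]$ of (\ref{maxii44}); I would argue that the physical SNR of Lemma \ref{etalemma} depends on the $B_n^k$ only through $e^{jB_n^k}$, hence only modulo $2\pi$, so the unconstrained optimum may be wrapped into $[0,2\pi]$ without changing the objective, and the null direction $\boldsymbol{1}$ simply reflects that only phase \emph{differences} are observable. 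A final point requiring care is the user-independent $\sin\Phi$ term appearing in $B_n^k$ but absent from the stated $\boldsymbol{b}^k$: since $\sin\Phi$ contributes a component lying in $\mathrm{range}(A)$, it shifts $\boldsymbol{\theta}^\star$ and should either be retained in $\boldsymbol{b}^k$ or explicitly absorbed before inverting.
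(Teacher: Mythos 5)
Your proposal follows essentially the same route as the paper's own proof: rewrite the weighted sum of squared phase differences as $\sum_{k}\gamma_k\|A\boldsymbol{\theta}-\boldsymbol{b}^k\|_2^2$ and solve the normal equations obtained by setting the gradient to zero. The additional points you raise --- that $A$ has rank $N-1$ so the displayed $(\cdot)^{-1}$ must be read as a pseudo-inverse (as the paper's notation section indeed stipulates), that $\boldsymbol{b}^k$ must be flattened to a vector of length $\tfrac{N(N-1)}{2}$ for the least-squares form to make sense, and that the user-independent $\sin\Phi$ term of $B_n^k$ appears to have been dropped from the stated $\boldsymbol{b}^k$ --- are correct and identify real gaps that the paper's one-line proof leaves unaddressed.
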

\begin{proof}
 Using the new notations introduced in the theorem, the objective function in (\ref{maxii44}) can be written as
     \begin{equation*}
        \sum \limits_{k=1}^K \gamma_k \left\|A \boldsymbol{\Theta}[s] - \boldsymbol{b}^k \right\|^2.
    \end{equation*}
    By derivating the previous quantity and solving the gradient equal to zero, we get equation (\ref{eq:optimalTheta}).
\end{proof}
Note that the optimal phase in equation (\ref{eq:optimalTheta}) is a function of the drone's position. In the following, we reestablish the dependency on $\boldsymbol{q}[s]$. Specifically, we denote
\begin{equation}\label{EquaThetaf}
    \boldsymbol{\Theta}^*[s]=f(\boldsymbol{q}[s]),
\end{equation}
with $f(\boldsymbol{q}[s])=\!\!\left(\!\sum \limits_{k=1}^K\!\gamma_k(\boldsymbol{q}[s]) A^T\!A\!\right)^{-1}\!\!\!\!\left(\sum \limits_{k=1}^K\!\gamma_k(\boldsymbol{q}[s]) A^T\boldsymbol{b}^k(\boldsymbol{q}[s])\!\!\right)$.
%For ease of notations, we will 
%With the optimal phase shift matrix now represented as a function of the position at each time step, the energy efficiency problem can be simplified to a trajectory design, which can be efficiently resolved using EMPC.

\subsection{Design of the Control Strategy}

To address the studied optimization problem, we first plug the expression of the phase shift matrix as a function of $\boldsymbol{q} [s]$, described by equation (\ref{EquaThetaf}), in problem (\ref{maxiii4}). Accordingly, the problem reduces to a trajectory optimization described as follows \begin{maxi!} |s|
   {\boldsymbol{q} ,\!\boldsymbol{v},\!\boldsymbol{a}}
   {\bar{EE}(\boldsymbol{q})\label{maxii}}{\label{maxiii}}{}
   \addConstraint{ (\!\ref{Veloc}\!),\!(\!\ref{Acc}\!),(\!\ref{Const5}\!),\! (\!\ref{equaV}\!),\!(\!\ref{Const7}\!).  \label{Const16}}{}{}
\end{maxi!}

    To tackle this problem, we use a solution strategy inspired by the classical EMPC approach in control theory \cite{ellis2014tutorial}. The proposed control strategy operates over a finite prediction horizon, which is a sequence of future time steps. The goal is to find control inputs that minimize the cost function over each sequence of time steps. The optimization is solved at each time step, and the control input for the current time is implemented. The system is then advanced by one-time step, and the optimization is solved again. This process is repeated in real-time, allowing the control strategy to adapt to changing conditions. In order to provide the new formulation of (\ref{maxiii}), we introduce some notations.

First, the linear system in (\ref{Veloc})-(\ref{Acc}) can be written as follows
\begin{equation}
\label{eqn:model}
\boldsymbol{x}[s+1]=\boldsymbol{W}\boldsymbol{x}[s]+\boldsymbol{Z}\boldsymbol{a}[s],
\end{equation}
with $\boldsymbol{W}=\begin{bmatrix}
1 & 0 \\
\delta_t & 1 
\end{bmatrix}$, $\boldsymbol{Z}=\begin{bmatrix}
\delta_t \\
\frac{\delta_t^2}{2} 
\end{bmatrix}$, and the state variable  $\boldsymbol{x}[s]=[\boldsymbol{v}[s],\boldsymbol{q}[s]] \in \mathbb{R}^{4\times 1}$ consists of the velocity and the position in the 2D space at the discrete time instant $s \in \{0,..,S\}$, and the input $\boldsymbol{a}[s] \in \mathbb{R}^{2\times 1}$ corresponding to the acceleration of the UAV.
%\begin{comment}
\begin{figure}
    \centering
   \includegraphics[scale=0.3]{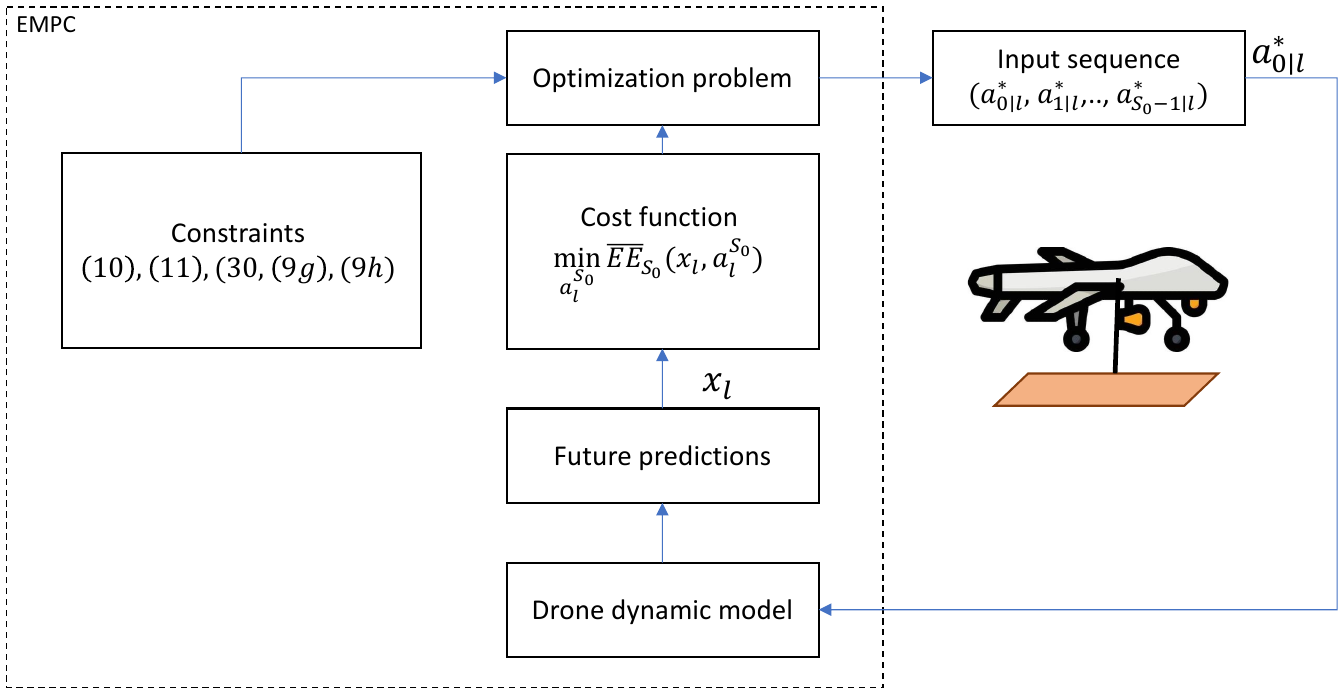}
    \caption{Illustration of the proposed control approach.}
    \label{fig:EMPC}
\end{figure}
%\end{comment}
In the following, we use the notation $\mathbf{a}=\left(\boldsymbol{a}[0], \boldsymbol{a}[1], \ldots\right)$ for a finite or infinite control sequence, and the notation $\Phi\left(s, \boldsymbol{x}_0, \mathbf{a}\right)$ for the state reached at discrete time $s$ from the initial condition $\boldsymbol{x}_0$ and under the control sequence $\mathbf{a}$. For the case of the linear system in (\ref{eqn:model}), we can obtain an explicit expression for $\Phi\left(s, \boldsymbol{x}_0, \mathbf{a}\right)$ using a recursion as follows, $\forall s \in \{1,..,S\}$
\begin{equation}\label{phi}
    \Phi\left(s, \boldsymbol{x}_0, \mathbf{a}\right)=\boldsymbol{W}^s\boldsymbol{x}[0]+\sum_{k=0}^{s-1}\boldsymbol{W}^k\boldsymbol{Z}\boldsymbol{a}[s-1-k]. 
\end{equation}
Notice that $\Phi\left(s, \boldsymbol{x}_0, \mathbf{a}\right)=[\boldsymbol{v}[s],\boldsymbol{q}[s]]$, the position and velocities at time $s$ can be written explicitly as function of the initial conditions $\boldsymbol{q}[0]$ and $\boldsymbol{v}[0]$ and the sequence of control input $\boldsymbol{a}[0],\boldsymbol{a}[1], \ldots, \boldsymbol{a}[s-1]$. Therefore, the number of decision variables of the optimization problem in (\ref{maxiii}) can be reduced to finding the sequence $\boldsymbol{a}[0],\boldsymbol{a}[1], \ldots, \boldsymbol{a}[s-1]$, with constraints (\ref{Veloc}),(\ref{Acc}) replaced by equation (\ref{phi}).

%and it can be transformed into the following equivalent optimization problem

%\begin{maxi!} |s|
%   {\boldsymbol{a}[s]}
%   {\bar{EE}(\boldsymbol{q}[s])\label{maxii1}}{\label{maxiii1}}{}
%   \addConstraint{ (\!\ref{Veloc}\!),\!(\!\ref{Acc}\!),(\!\ref{Const5}\!),\! (\!\ref{Const6}\!),\!(\!\ref{Const7}\!),\!(\!\ref{Const8}\!).  \label{Const16}}{}{}
%\end{maxi!}

\begin{figure*}
\begin{tabular}{lll}
  \hspace{-0.5cm}  \includegraphics[scale=0.3]{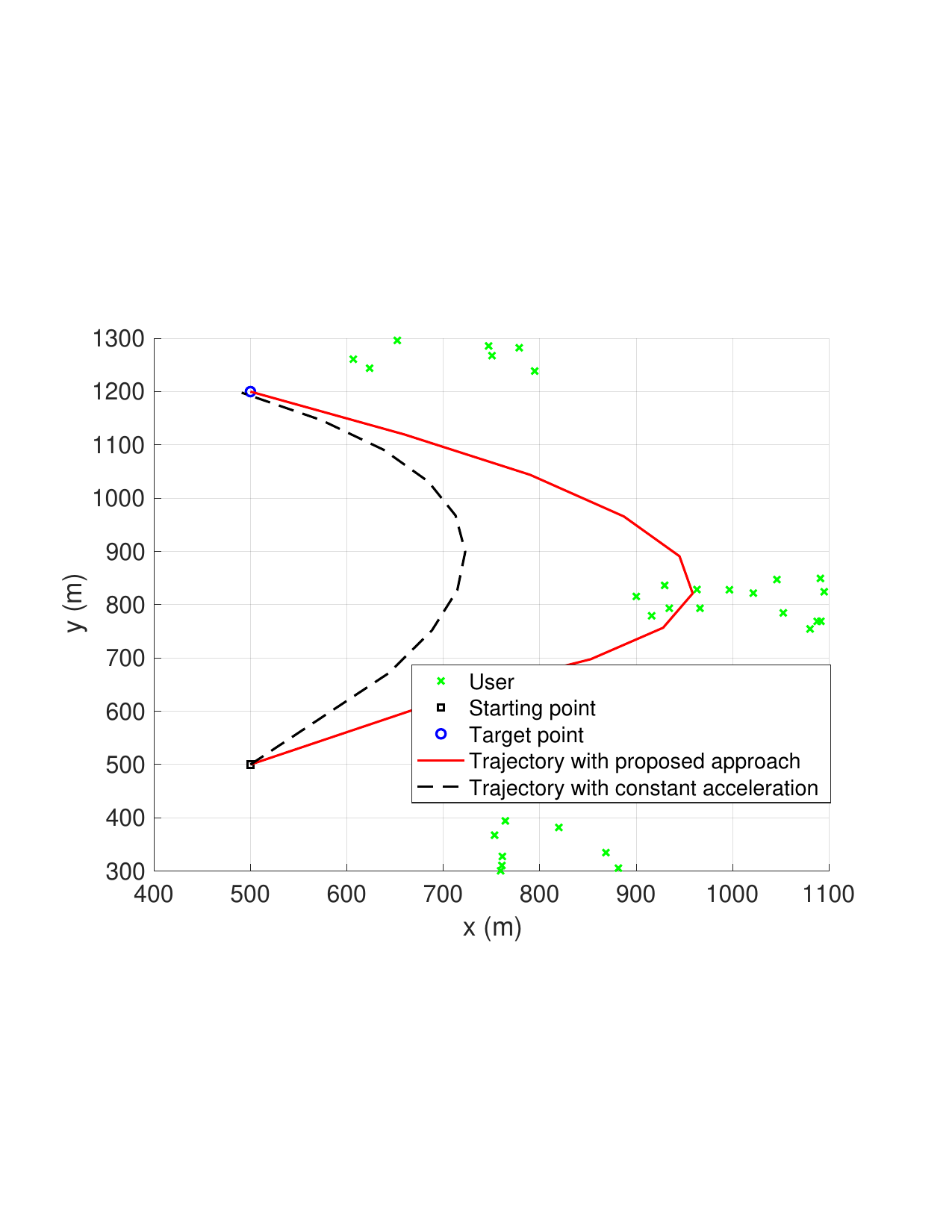}\hspace{-1cm} & \includegraphics[scale=0.3]{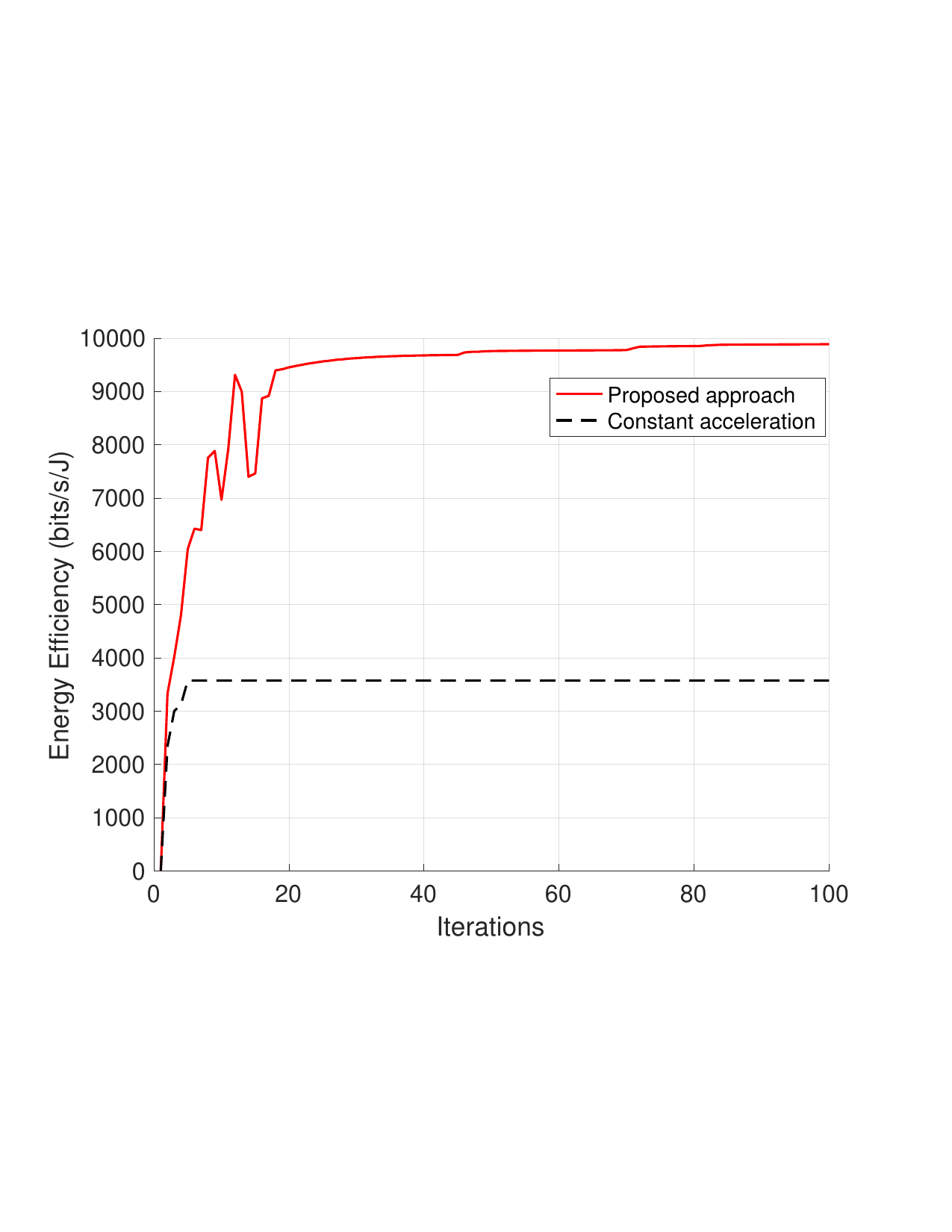}\hspace{-1.2cm} & \includegraphics[scale=0.3]{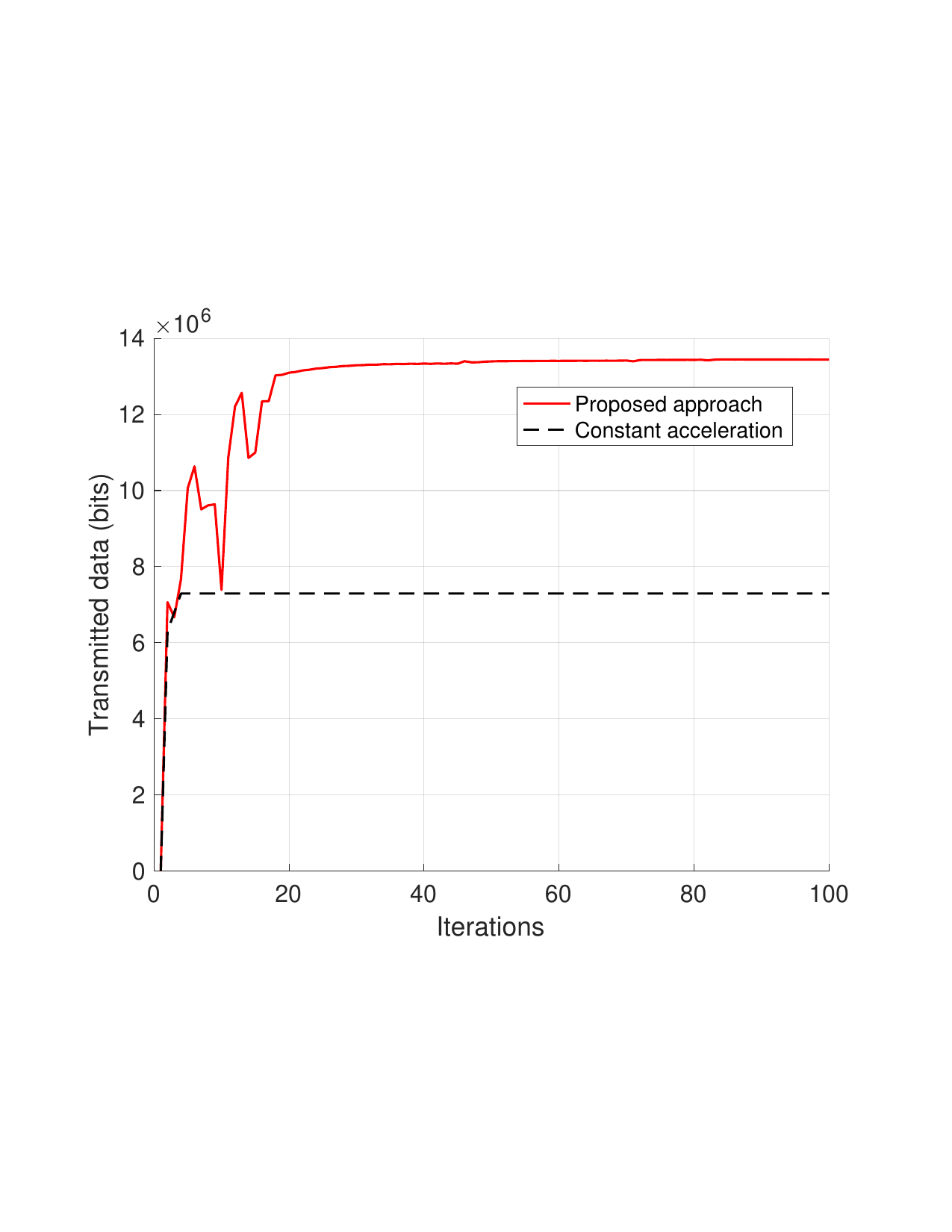} 
 \vspace{-2cm}   
\end{tabular}
\caption{(a) ARIS trajectory, (b) Energy efficiency vs iterations, (c) Transmitted amount of data.}
\label{fig:Energy}
\end{figure*}

Given $1\leq S_0\leq S$ as prediction horizon, a control sequence computed at sampling time $s$ will be denoted as $\mathbf{a}_s^{S_0}=\left(\boldsymbol{a}_{0 \mid s}, \ldots, \boldsymbol{a}_{S_0-1 \mid s}\right) \in \mathbb{R}^{2 \times S_0}$. Given state $\boldsymbol{x}_s$ at discrete time instant $s$, the associated predicted trajectory at $s \in \{0,..,S_0\}$ is denoted $\mathbf{x}_s^{S_0}=\left(\boldsymbol{x}_{0 \mid s}, \boldsymbol{x}_{1 \mid s}, \ldots, \boldsymbol{x}_{S_0 \mid s}\right)$ where $\boldsymbol{x}_{k \mid s}=\Phi\left(k, \boldsymbol{x}_s, \mathbf{a}_s^{S_0}\right)$, for $k=0, \ldots, S_0$.

Considering the control system (\ref{eqn:model}), the main idea of our control scheme relies on the resolution of an optimal control problem, at each discrete instant $s \in \{0, \ldots, S_0 \}$. Starting from the optimization problem with the acceleration as an optimization variable, this finite-horizon optimization, over the finite prediction horizon $S_0$, takes the following form
\begin{align}
\label{eqn:MPC}
 \arg \min& _{\mathbf{a}_l^{S_0}} \bar{EE}_{S_0}\left(x_l, \mathbf{a}_l^{S_0}\right)\\
 &\text{s.t } \quad x_{k \mid l}=\phi\left(k, x_l, \mathbf{a}_l^{S_0}\right), k=0, \ldots, {S_0} \nonumber\\
 &\text{and } \quad (\ref{Veloc}),(\ref{Acc}),(\!\ref{equaV}\!),(\ref{Const7}), \nonumber 
\end{align}

with $\bar{EE}_{S_0}\!\!\left(\!x_l, \mathbf{a}_l^{S_0}\!\right)\!=\!\frac{\sum_{s=0}^{N-1} 
 \bar{R}_s}{\frac{1}{2}m\big(\!\!\parallel \!{v}(S\mid j)\!\parallel^{2}_2\!-\!\parallel\! {v}(0 \mid j)\!\parallel^{2}_2\!\!\big)\!+\!\!\sum_{s=0}^{{S_0}-1}\!\! 
 \!\bar{E}_s}$

where $\bar{R}_s=\sum_{k=1}^{K}\!\frac{\mathcal{B}}{K}\log_2\!\left(\!1+\eta_k\big(\!(q(s \mid j),f(q(s\mid j)\!\big)\right)$

where the last equality follows from (\ref{EquaThetaf}) and the fact that
\begin{equation*}
\begin{array}{ll}
   \bar{E}_s=\! c_{1}\!\parallel\! {v}(s \!\mid \!l)\!\parallel^{3}_2+\frac{c_{2}}{\parallel {v}(s \mid l)\parallel}_2 \!(1+\frac{\parallel {u}(s \mid l)\parallel^{2}_2-\frac{({a}^{T}(s \mid l){v}(s \mid l))^{2}}{\parallel {v}(s \mid l) \parallel^{2}_2}}{g^{2}} )  
    \end{array}
\end{equation*}

The optimization problem (\ref{eqn:MPC}) admits at least one optimizer, which is denoted by the optimal control sequence $
\mathbf{a}_l^{S_0 *}\left(x_l\right)=\left(a_{0 \mid l}^*, \ldots, a_{S_0-1 \mid l}^*\right)
$. Then, the control action fed to the system can be described as a feedback law
$\mu_{S_0}\left(x_l\right)=a_{0 \mid l}^*$.

The resolution of the optimization problem (\ref{eqn:MPC}) is repeated online, at each sampling instant leading to the following closed-loop behavior $x_{l+1}=\boldsymbol{W}x_l+\boldsymbol{Z}\mu_{S_0}\left(x_l\right)$. The optimization problem is solved numerically at each time step using a non-linear programming solver \cite{ellis2014tutorial}. This process is illustrated in Figure \ref{fig:EMPC}.

\section{Simulation Results}

To assess the performance of the proposed approach, we consider an area of $1300m\times1100m$, where $K=30$ users are distributed over three clusters as can be seen in Figure \ref{fig:Energy}(a). We assume that the UAVs fly from a starting point to a target destination at an altitude of $150m$. We assume that the maximum velocity and acceleration $V^{\rm max}=50m/s$ and $a^{\rm max}=10m/s^2$, respectively, and the initial velocity is $V[0]=30m/s$. The initial UAV position is $[500,500]$, and the target position is $[500,1200]$. The time step $\delta_t=3s$, and $S=10$. We assume that the BS transmits at a power $P_k=1Watt$ for every user $k$. We assume a bandwidth of $\mathcal{B}=10Mhz$ and the variance of the noise $\sigma^2=100dbm$.

\begin{comment}
\begin{figure}
    \centering
    \includegraphics[scale=0.34]{}
    \caption{ARIS trajectory.}
    \label{fig:traj}
\end{figure}
\begin{figure}
    \centering
    \includegraphics[scale=0.34]{}
    \caption{Energy efficiency vs iterations.}
    \label{fig:Energy}
\end{figure}
\begin{figure}
    \centering
    \includegraphics[scale=0.34]{}
    \caption{Transmitted amount of data vs iterations.}
    \label{fig:Data}
\end{figure}
\end{comment}

In Figure \ref{fig:Energy}(a), we plot the trajectory of ARIS in the 2D plan. As can be seen from the figure, the trajectory obtained by the proposed approach starts from the initial point and deviates toward the cluster with the largest number of users. This strategic deviation enhances the transmitted signal for users in that cluster, contributing positively to energy efficiency. It is crucial to highlight that had the drone deviated towards the other clusters, it would have increased its energy consumption without significantly improving the transmitted data volume.  Consequently, this deviation would have led to a decrease in the drone's energy efficiency. Additionally, we present the trajectory of ARIS when the dynamic model of the drone is neglected. In this scenario, the drone is assumed to move with a constant acceleration (i.e., linear velocity).
As illustrated in the figure, the drone exhibits a slight deviation towards the region with the users' concentration, yet promptly readjusts its course to return to the intended destination. This maneuver is executed to conserve energy. Restricting the drone from adjusting its acceleration during flight leads to increased energy consumption, consequently preventing the drone from getting closer to the users. It is noteworthy that if only the energy consumption were taken into account in the optimization, the drone would have traversed a straight path connecting the initial point to the target destination.

In Figure \ref{fig:Energy}(b), we plot the energy efficiency against iterations. As illustrated by the figure, the energy efficiency increases over iterations until it reaches a maximal value. It can also be seen from the figure, that the energy efficiency obtained by the proposed approach, incorporating the dynamic model of ARIS, yields a considerably higher efficiency compared to the scenario where the dynamic model is omitted. It is important to note that the case where the dynamic model is ignored converges faster due to the reduced number of optimization variables involved in the optimization process. A similar behavior can be seen in Figure \ref{fig:Energy}(c) where we plot the amount of transmitted bits against iterations. As expected, the proposed approach ensures a higher amount of transmitted data. This can be explained by the fact that the ARIS is closer to the users which ensures a better throughput at the edge of the cell.

\section{Conclusion}
In this paper, we investigated the scenario where a drone flies from a starting point to a target destination while reflecting the signal to obstructed users through a RIS. Our main goal was to optimize the trajectory of the ARIS along with its acceleration and velocity, as well as the RIS phase shift to maximize the energy efficiency. The studied optimization is challenging and was solved using a linear state-space approximation. The phase shift matrix optimization and the trajectory design are decoupled and solved using convex approximation and EMPC, respectively. In future work, we are planning to extend the proposed work to a more complex scenario where the drone is conditioned by a probabilistic LoS, and investigate the scenario of interference at the users.  
\section*{Acknowledgment}
This document has been produced with the financial assistance of the European Union (Grant no. DCI-PANAF/2020/420-028), through the African Research Initiative for Scientific Excellence (ARISE), pilot programme. ARISE is implemented by the African Academy of Sciences with support from the European Commission and the African Union Commission. The contents of this document are the sole responsibility of the author(s) and can under no circumstances be regarded as reflecting the position of the European Union, the African Academy of Sciences, and the African Union Commission.
\balance
\bibliographystyle{IEEEbib}
%\bibliography{BiblioUAV}
%\bibliographystyle{plain}
\bibliography{references}

\end{document}